\newtheorem{theorem}{Theorem}
\newtheorem{lemma}[theorem]{Lemma}
\begin{document}

\title{The Graphs of Planar Soap Bubbles}
\author{David Eppstein\thanks{Computer Science Department, University of California, Irvine}}
\date{}
\maketitle

\begin{abstract}
We characterize the graphs formed by two-dimensional soap bubbles as being exactly the 3-regular bridgeless planar multigraphs. Our characterization combines a local characterization of soap bubble graphs in terms of the curvatures of arcs meeting at common vertices, a proof that this characterization remains invariant under M\"obius transformations, an application of M\"obius invariance to prove bridgelessness, and a M\"obius-invariant power diagram of circles previously developed by the author for its applications in graph drawing.
\end{abstract}

% Separate title page for SODA submission requirements
\thispagestyle{empty}
\newpage
\setcounter{page}{1}
\pagestyle{plain}

\section{Introduction}

\begin{figure}[t]
\centering
\includegraphics[height=2.75in]{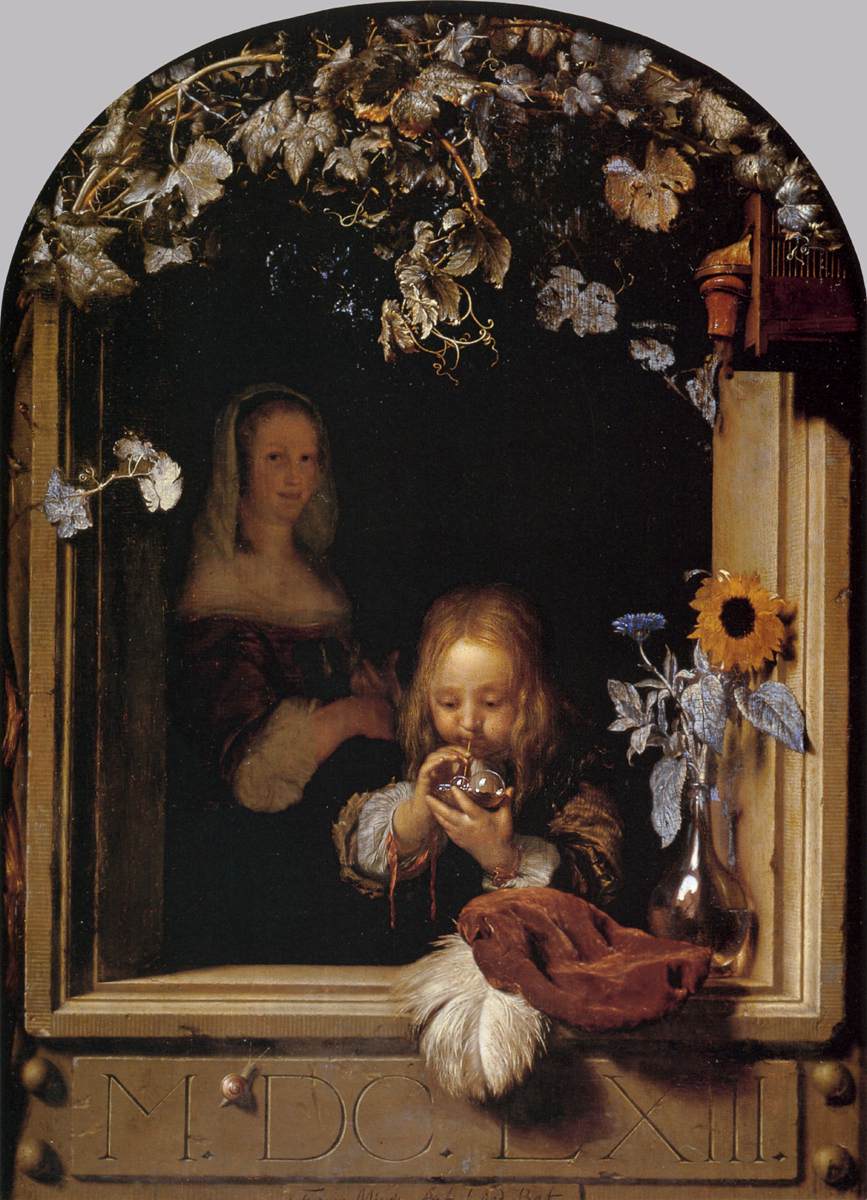}
\qquad\qquad
\includegraphics[height=2.75in]{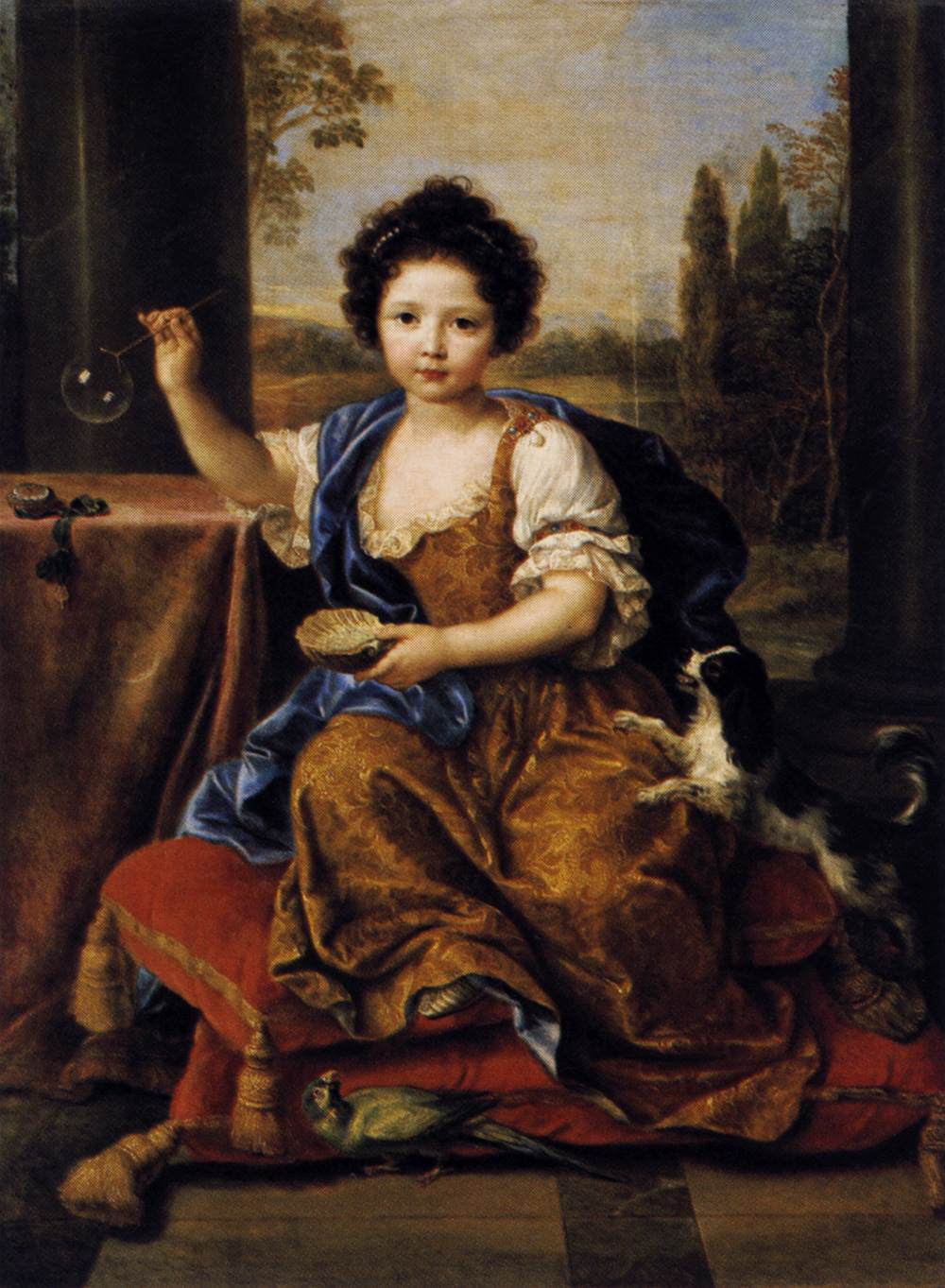}
\caption{Left: {\it Boy Blowing Bubbles} (1663), by 
Frans van Mieris.
Right: {\it Girl Blowing Soap Bubbles} (1674), by Pierre Mignard. Both images public domain from Wikimedia Commons.}
\label{fig:art}
\end{figure}

Blowing soap bubbles has long been a popular pastime for children and a subject of art (Figure~\ref{fig:art}), but the mathematics of soap bubbles has also been a subject of study and some mystery for many years, and the geometry and combinatorial structure of soap bubbles are not well understood. The Kelvin conjecture, stating that a foam with truncated-octahedron cells has the minimum total surface area among all soap bubble foams with equal-volume cells, was posed in the 19th century by Lord Kelvin and finally solved negatively in 1993 by the discovery of the Weaire--Phelan structure, a more parsimonious foam with two different cell shapes~\cite{Tho-PM-87,WeaPhe-PML-94}; proof that this newer structure is optimal remains elusive. A related problem, the minimum-area cap for the ends of hexagonal honeycomb cells, also remains open~\cite{Hal-NAMS-00}. It is conjectured that minimum-area four-bubble clusters, for given amounts of gas in each bubble, form stereographic projections of a four-dimensional simplex, with each 2-face of the simplex projecting to a spherical surface in three-dimensional space, but this has not been proven, and there are clusters of six bubbles for which the numerically-computed surfaces separating the bubbles are not spherical~\cite{Sul-FE-97}. Even as simple-sounding a problem as the double bubble conjecture (proving that the optimal structure for clusters of two bubbles has three spherical patches sharing a common circle) took many years to solve~\cite{HasSch-AM-00,HutMorRit-AM-02}

In this paper we study the special case of soap bubbles formed between glass plates, set close enough to each other that each bubble surface spans the entire gap between the plates. Effectively, these bubbles are confined to a two dimensional plane rather than three dimensional space. Such a system is familiar through the ``soap bubble computer'', in which pegs are placed between the two plates at specified locations; a soap film that connects the pegs forms a minimal (though generally not minimum) Steiner network, providing a heuristic and approximate physical solution to an NP-complete problem~\cite{Aar-SN-05,BerGra-SA-89,DutKhaRoy-AJP-10,Hof-TMT-79,Ise-AS-76}. However, we consider here bubbles supported only by the two glass plates, without  additional pegs (in equilibrium, assuming perfectly impermeable bubbles, but not necessarily globally minimizing the total bubble length). When restricted in this way, bubble clusters may be described combinatorially by planar graphs, with vertices where three bubbles meet and edges for the interfaces between pairs of bubbles. With infinitely many bubbles of equal areas, the optimal shape for the bubbles is a hexagonal tiling of the plane~\cite{Hal-NAMS-00,Hal-DCG-01}, and similar tilings arise also for finitely many equal bubbles~\cite{CoxMorGra-12}, but other conditions give rise to more irregular arrangements of bubbles (Figure~\ref{fig:2dfoam})
The question we consider is, with finitely many bubbles of variable areas, which planar graphs are possible?

\begin{wrapfigure}[20]{r}{.4\textwidth}
  \centering
  \includegraphics[width=2.25in]{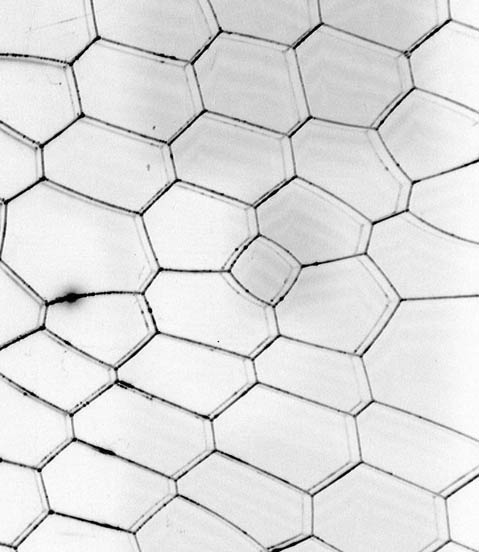}
  \caption{Soap bubbles between two glass plates. Public domain negative image by Klaus-Dieter Keller \href{http://commons.wikimedia.org/wiki/File:2-dimensional_foam_(colors_inverted).jpg}{from Wikimedia commons}.}
  \label{fig:2dfoam}
\end{wrapfigure}

We provide a complete answer to this question, by showing that the graphs of planar  soap bubble clusters are exactly the bridgeless three-regular planar graphs. To do this we use two sets of long-known facts about soap bubbles: Plateau's laws, describing the local geometry of the surfaces and junctions in a soap bubble cluster, and the Young--Laplace equation, which relates the curvature of soap bubble surfaces to the pressures of the gas within the bubbles. Based on these facts we prove two additional properties of planar soap bubbles clusters: a characterization of them in terms of purely local conditions on the curvatures of the arcs meeting at each vertex, and a lemma stating that a M\"obius transformation of a planar soap bubble cluster is itself also a planar soap bubble cluster. The fact that a soap bubble cluster must form a bridgeless graph follows easily by combining this M\"obius transformation property with the Young--Laplace equation.
We show that bridgelessness and 3-regularity are not just necessary but also sufficient  by using two of the three steps in a recent method of the author for finding circular-arc drawings of planar graphs~\cite{Epp-lombardi}. The first step applies to 3-vertex-connected graphs, and applies a novel M\"obius-invariant power diagram of disks (defined using three-dimensional hyperbolic geometry) to a family of disks constructed from the planar dual of the given graph by the Koebe--Andreev--Thurston circle packing theorem~\cite{Ste-ICP-05}. The second step generalizes from 3-vertex-connected graphs to 2-vertex-connected graphs by means of SPQR trees~\cite{DiBTam-ICALP-90}. (The third step of this graph drawing algorithm, generalizing from 2-vertex-connected graphs to arbitrary graphs, does not work for soap bubbles.) Because of our use of power diagrams, our result can be seen as a validation of Sullivan's suggestion that ``we might look for foams as relaxations of Voronoi decompositions''~\cite{Sul-FE-97}.

Soap bubbles are closely related to \emph{Lombardi drawings} of graphs, drawings in which all edges are represented as circular arcs meeting at equal angles at each vertex~\cite{DunEppGoo-GD-11,DunEppGoo-GD-10,DunEppGoo-JGAA-12,Epp-lombardi}, and as detailed above our characterization uses methods previously used for Lombardi drawing; however, not all planar 3-regular Lombardi drawings represent soap bubbles, as Figure~\ref{fig:valid} (left) shows.
Our result can also be seen as an analogue for soap bubbles of \emph{Steinitz's theorem} that the graphs of three-dimensional convex polyhedra are exactly the 3-vertex-connected planar graphs~\cite{Ste-EMW-22}, and of
several related theorems characterizing the combinatorial structure of geometric objects. Similar graph-theoretic characterizations (with different connectivity conditions) are also known for simply-connected polyhedra with axis-parallel edges~\cite{EppMum-SCG-10} and for subdivisions of rectangles into smaller rectangles~\cite{KozKin-DAC-84,LeiLai-DAC-84,Ung-JLMS-53}, among other structures. Indeed, the Koebe--Andreev--Thurston theorem that we use here can also be viewed as a result in this vein.

\section{The classical conditions}

\begin{figure}[t]
\centering\includegraphics[height=1.75in]{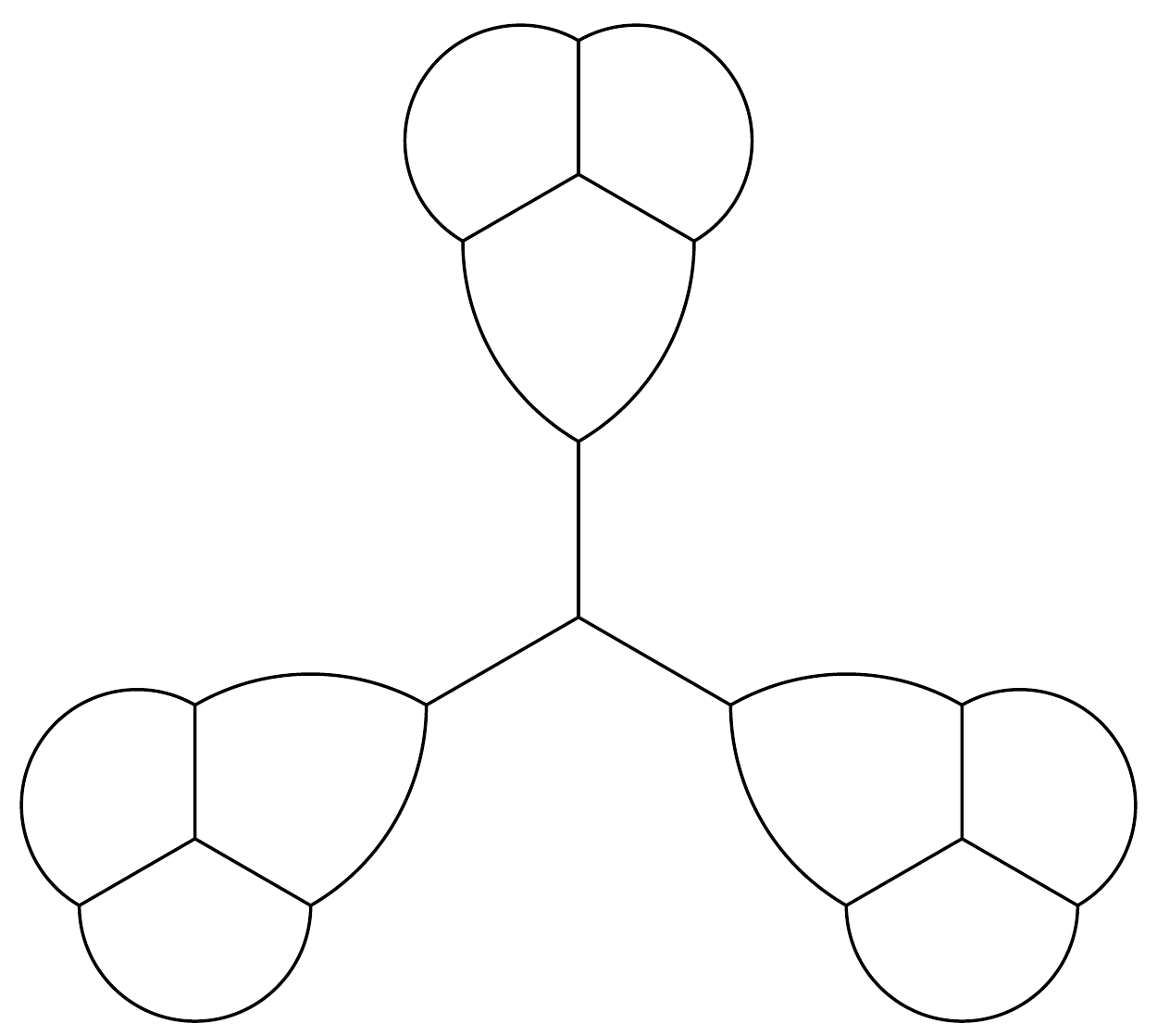}\qquad
\includegraphics[height=1.75in]{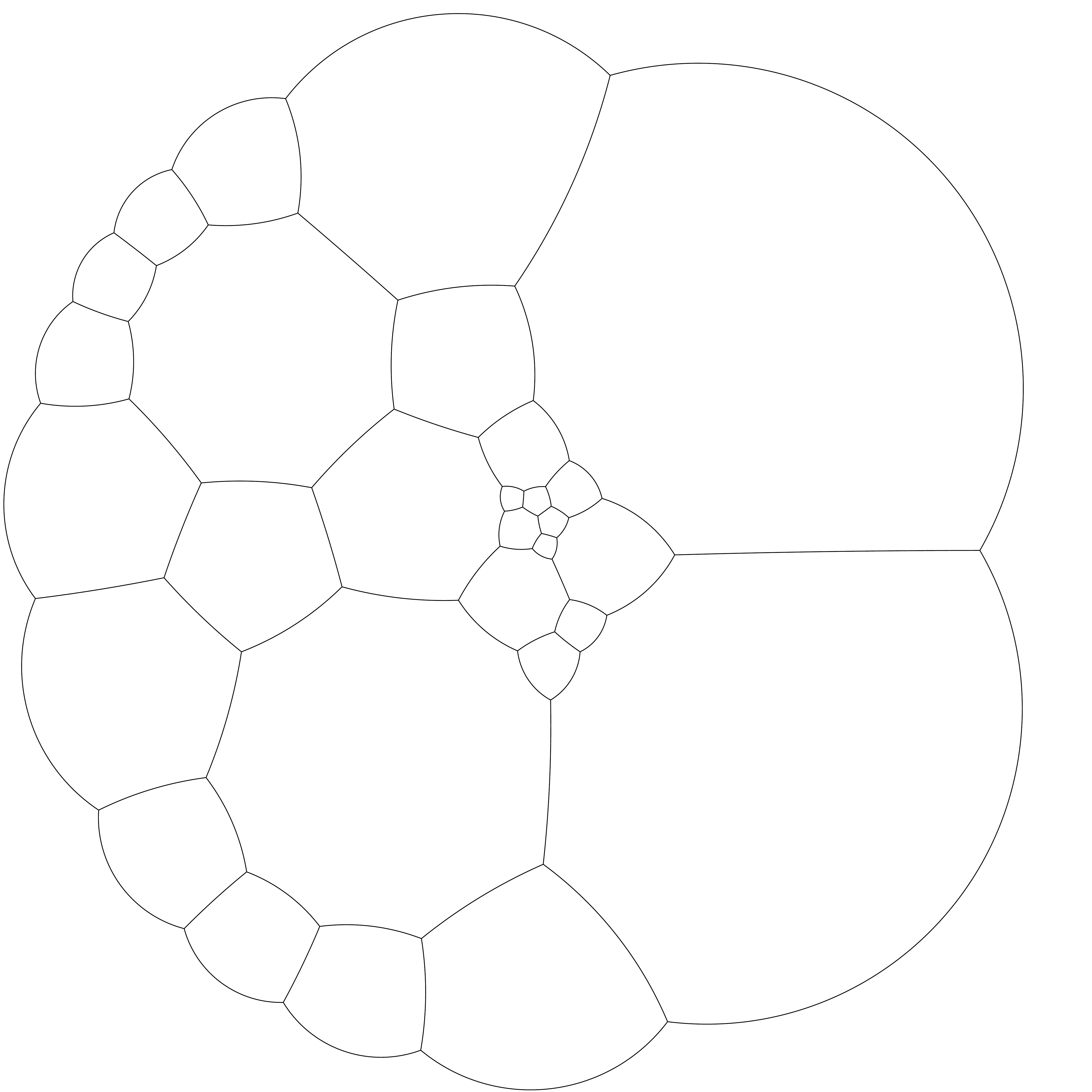}\qquad
\includegraphics[height=1.75in]{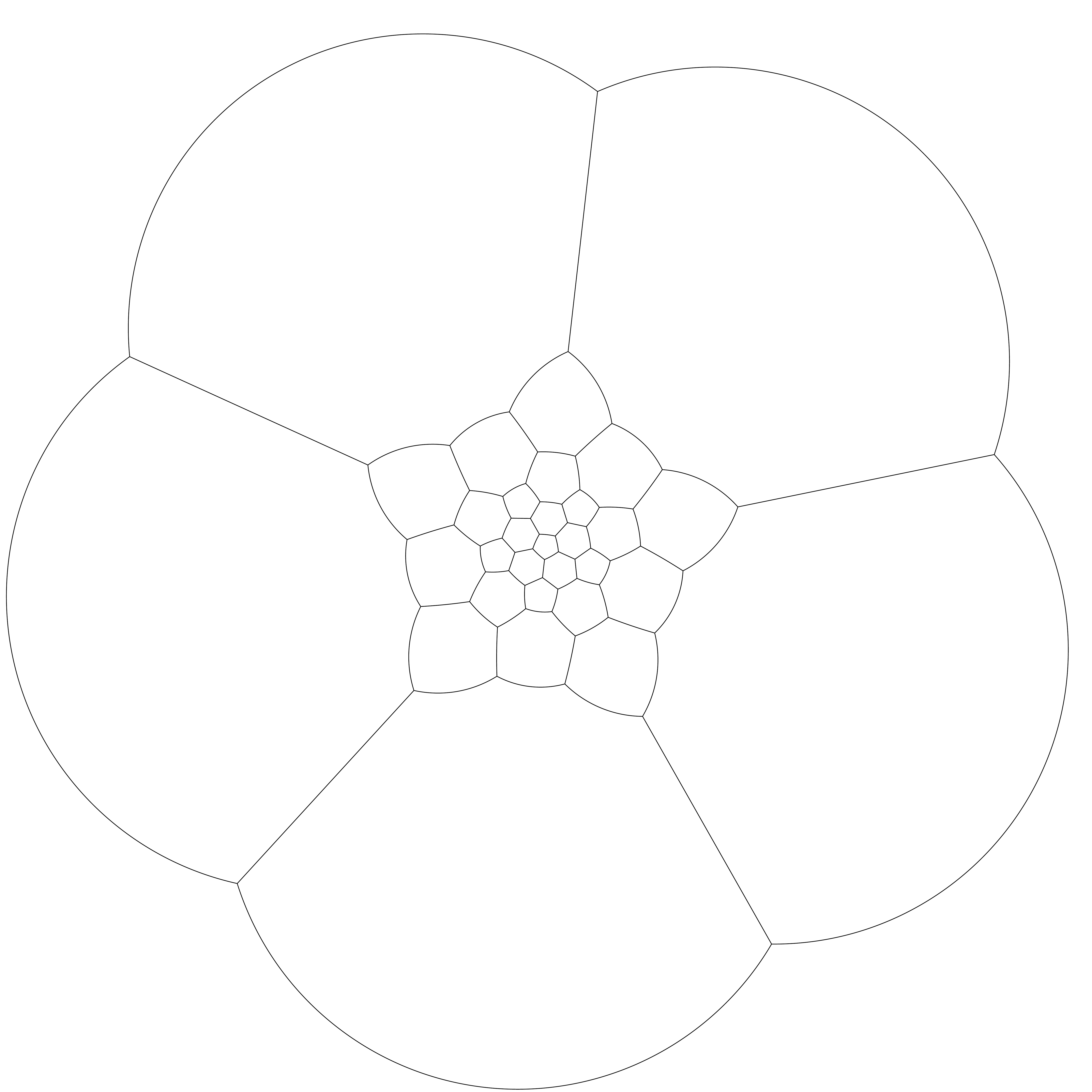}
\caption{Left: A collection of segments and arcs that obeys Plateau's laws, but does not form a  planar soap bubble cluster, because it disobays the Young--Laplace equation (the curvatures do not add to zero at some of its vertices). Three of its segments do not separate different regions, impossible for soap bubbles. Modified from a figure in~\cite{DunEppGoo-JGAA-12}. Center: An irregular planar soap bubble cluster. Right: a soap bubble cluster with the topology of a buckyball.}
\label{fig:valid}
\end{figure}

Two classical results on soap bubbles may be used to describe their geometry. \emph{Plateau's laws}, observed experimentally in the 19th century by Joseph Plateau~\cite{Plateau-1873} and proven rigorously for minimal surfaces in 1976 by Jean Taylor~\cite{Tay-AM-76}, state that in a three-dimensional soap bubble cluster,
\begin{itemize}
\item Each two-dimensional surface has constant \emph{mean curvature}. That is, the two principal curvatures of the surface take the same average value at all points in the surface.
\item At each one-dimensional junction of surfaces, exactly three surfaces meet, and they form dihedral angles of $2\pi/3$ with each other. Such a junction is called a \emph{Plateau border}.
\item At each endpoint of a Plateau border, exactly four Plateau borders and six two-dimensional surfaces meet, and the borders form angles of $\cos^{-1}(-1/3)$ with each other (the same angle that would be formed by two rays from the center of a regular tetrahedron through two of its vertices).
\end{itemize}
The Young--Laplace equation,  formulated in the 19th century by Thomas Young and Pierre-Simon Laplace, states that the mean curvature of a two-dimensional surface in a soap bubble cluster is proportional to the difference in pressure on the two sides of the surface, with a constant of proportionality determined by the surface tension of the fluid forming the soap bubbles~\cite{Ise-LY-78}.

In the case of a soap bubble cluster formed between two flat plates, the bubble walls are perpendicular to the plates, and all cross-sections of the bubbles by a plane parallel to one of the plates are congruent to each other. In this case, the principal curvature in the direction perpendicular to the plates is zero, and we may simplify both laws to describe the planar figure formed by the cross-sections:
\begin{itemize}
\item The figure consists of one-dimensional curves of constant curvature; that is, circular arcs or line segments that do not cross each other.
\item At each endpoint of one of these arcs or segments, exactly three curves meet, and they form angles of $2\pi/3$ with each other.
\item The curvature of any one of the circular arcs (the inverse of its radius) is proportional to the difference in pressure between the bubbles it separates. Bubbles with the same pressure as each other are separated by line segments, with zero curvature.
\end{itemize}

If a system of bubbles and bubble boundaries obeys these laws, the local forces on each vertex or arc caused by pressure and surface tension will sum to zero, so it will be in an equilibrium state (though possibly an unstable one~\cite{ForVaxCox-CSA-07,WeaCoxGra-EPJE-02}). Thus,
we define a planar soap bubble cluster to be a finite collection of curves, obeying the planar restrictions of Plateau's laws and the Young--Laplace equation (for some assignment of pressures to each bubble). Equivalently, in graph-theoretic terminology, it is a planar embedding of a graph with circular-arc edges, forming angles of $2\pi/3$ at each vertex, for which it is possible to find a pressure assignment to the bubbles that is consistent with all of the arc curvatures. As we describe in the next section, it is possible to replace this existential condition (the existence of a consistent pressure assignment) with a simpler statement about the curvatures of the three arcs meeting at each vertex. Examples of planar soap bubble cluster are shown in Figure~\ref{fig:valid}, center and right.

\section{Local characterization}

To state our local characterization of  planar soap bubble clusters, we need a signed variation of curvature. Given a circular arc ending at a point $p$, of curvature $c\ge 0$, we define the \emph{signed curvature} of the arc at $p$ to be $c$ whenever the arc curves clockwise as it leaves $p$, and to be $-c$ whenever it curves counterclockwise. For line segments, of course, the signed curvature is zero.

\begin{lemma}
\label{lem:local}
A finite collection of circular arcs and line segments forms a planar soap bubble cluster if and only if it obeys Plateau's laws and if, at each endpoint of an arc or segment, the sum of the signed curvatures of the three incoming curves is zero.
\end{lemma}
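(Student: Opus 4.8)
The plan is to reduce the lemma, via the definition of a planar soap bubble cluster, to the following statement: \emph{assuming Plateau's laws hold}, there is an assignment of pressures to the regions of the arrangement that is consistent with the Young--Laplace equation on every arc if and only if the three signed curvatures at each vertex sum to zero. Throughout I will fix the proportionality constant $\lambda>0$ of the Young--Laplace equation and adopt one sign convention, encapsulated in the following observation: if we traverse a circular arc $e$ away from one of its endpoints, then the signed curvature of $e$ at that endpoint equals $\lambda\,(p_R-p_L)$, where $p_R$ and $p_L$ are the pressures of the regions immediately to the right and to the left of $e$. This is just an unwinding of the Young--Laplace equation together with the geometric fact that the center of a circular soap film lies on the side of higher pressure; I will also record the companion fact that the two signed curvatures of $e$ at its endpoints are negatives of one another, since traversing $e$ from the other end interchanges left and right.

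Granting this, the forward implication is immediate: at a vertex $v$ where three arcs separate regions $A$, $B$, $C$ listed in this cyclic order around $v$, the signed curvatures at $v$ of the three arcs are $\lambda(p_A-p_B)$, $\lambda(p_B-p_C)$, and $\lambda(p_C-p_A)$, and these telescope to zero.

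For the converse I will translate the existence of a consistent pressure assignment into a potential problem on the planar dual $G^{\ast}$ of the arrangement $G$, which I regard as embedded on the sphere so that the unbounded region is merely one more face of $G$ (hence one more vertex of $G^{\ast}$). The geometric data assigns to each edge of $G^{\ast}$ the real number equal to $1/\lambda$ times the signed curvature of the corresponding arc, oriented (by the convention above) from the region on its right to the region on its left; the companion fact guarantees that this number is independent of which endpoint of the arc is used to read it off, so we obtain a well-defined real $1$-cochain on $G^{\ast}$. A pressure assignment consistent with all arc curvatures is precisely a function on the vertices of $G^{\ast}$ whose coboundary is this $1$-cochain, i.e., the $1$-cochain must be a coboundary; on a graph this is equivalent to the cochain evaluating to zero on every cycle, and for that it suffices to check a generating set of the cycle space of $G^{\ast}$. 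Since $G^{\ast}$ is embedded on the sphere, its cycle space is generated by the boundary cycles of its faces, which correspond bijectively to the vertices of $G$; and because $G$ is $3$-regular (part of Plateau's laws), each such boundary cycle is a triangle whose three edges are dual to the three arcs meeting at the corresponding vertex. Evaluating the $1$-cochain on that triangle yields exactly $1/\lambda$ times the sum of the three signed curvatures at the vertex, which vanishes by hypothesis. Hence the $1$-cochain is a coboundary, the desired pressures exist, and the converse follows. (Isolated closed-curve bubbles carry no vertices and impose no constraint beyond the single solvable equation relating the circle's curvature to the pressure difference across it, so they can simply be set aside.)

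The step I expect to be delicate is not any of these arguments in isolation but the bookkeeping that makes them compatible: one must align the definition of signed curvature (clockwise versus counterclockwise as the arc leaves the vertex) with the ``right side versus left side'' of a traversal, with the convention placing the center of a soap film toward the higher pressure, and with the induced orientation on the boundary of a face of $G^{\ast}$, so that both the telescoping in the forward direction and the identification of the vertex condition with the dual-face cycle condition in the converse come out with the correct signs. Once these conventions are pinned down, what remains is elementary linear algebra on the cycle space of $G^{\ast}$, with no further obstacles.
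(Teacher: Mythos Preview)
Your argument is correct and is at bottom the same idea as the paper's, but packaged differently. In both proofs the forward direction is the telescoping identity $(p_A-p_B)+(p_B-p_C)+(p_C-p_A)=0$. For the converse, the paper defines the pressure of a bubble by integrating curvature-jumps along an arbitrary curve from the unbounded region and then argues well-definedness by continuously deforming that curve across one triple junction at a time, where the zero-sum hypothesis keeps the running total unchanged. Your version is the discrete-cohomology translation of exactly this: the curvature data become a $1$-cochain on $G^\ast$, a pressure assignment is a $0$-cochain with the right coboundary, and ``well-defined under deformation across a triple point'' becomes ``vanishes on the generating face-cycles of $G^\ast$.'' Each packaging has its virtue: the paper's curve-sliding picture is more self-contained for a reader unfamiliar with (co)cycle spaces, while yours makes transparent that the only obstruction lives in $H^1$ of the dual and is killed precisely by the vertex conditions.

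One small caveat: your sentence ``the boundary cycles of its faces, which correspond bijectively to the vertices of $G$'' presumes $G$ is connected. If the arc collection has several components (nothing in the hypotheses forbids this), $(G^\ast)^\ast\neq G$ and that bijection fails. The fix is easy---either note that $G^\ast$ is still connected and that the vertex triangles $T_v$ nonetheless generate its cycle space (an Euler-formula count gives $\dim = V-c$, matched by $V$ triangles with one relation per component), or simply run your argument on one component of $G$ at a time, using that distinct components can only share faces, so the pressure of a shared face is fixed once and propagates consistently. The paper's deformation argument sidesteps this bookkeeping automatically.
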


Essentially the same result (for bubbles on a sphere instead of in the plane) can be found in Quinn Maurmann's  appendix to~\cite{Qui-RHUMJ-07}, so we defer the proof of the lemma to an appendix.

\section{M\"obius invariance}

When three three-dimensional bubbles form a cluster in which all surface patches are spherical, it was known to Plateau that the three inner surface patches have centers of curvature that all lie on a single line~\cite{HerAre-PMaPS-91}. Although centers of circles and collinearity of points are not M\"obius-invariant properties, we can restate the same collinearity property for planar bubble clusters in a M\"obius-invariant way, in terms of the crossing points of the circles. This  restatement will allow us to prove that the M\"obius transformation of a planar soap bubble cluster is itself another planar soap bubble cluster.

\begin{lemma}
\label{lem:equivcond}
Let three circular arcs meet at angles of $2\pi/3$ at a point $X$, with centers of curvature $C_i$ for $i\in\{1,2,3\}$, and let $r_i=|XC_i|$ be the radii of each of the three arcs. Then the following three conditions are equivalent:
\begin{enumerate}
\item The sum of the three signed curvatures of the arcs is zero.
\item The three center points $C_i$ are collinear.
\item The three circles with centers $C_i$ and radii $r_i$ have two triple crossing points.
\end{enumerate}
\end{lemma}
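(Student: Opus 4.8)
The plan is to set up coordinates with $X$ at the origin and to encode each arc by the vector from $X$ to its center of curvature. The three arcs leave $X$ along outgoing unit tangent vectors $t_1,t_2,t_3$ that are pairwise at angle $2\pi/3$, hence equally spaced around the circle, so $t_1+t_2+t_3=0$. Let $w_i$ be $t_i$ rotated by a fixed right angle; then the $w_i$ are also equally spaced and $w_1+w_2+w_3=0$, and any two of them are linearly independent. Writing $s_i\ne 0$ for the signed curvature of arc $i$, the clockwise/counterclockwise convention in the definition of signed curvature is exactly the statement that the center of curvature satisfies $C_i-X=(1/s_i)\,w_i$ (the sign of $s_i$ selects which of the two normal rays the center lies on, and $|1/s_i|=r_i$). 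This is the one place where sign bookkeeping matters, and after it is pinned down everything is linear or synthetic.

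For $(1)\Leftrightarrow(2)$ I would use the elementary criterion that three position vectors $C_1,C_2,C_3$ (from the origin $X$) are collinear if and only if there exist scalars $\lambda_i$, not all zero, with $\sum_i\lambda_i=0$ and $\sum_i\lambda_i C_i=0$. Since the $w_i$ span the plane and any two are independent, the space of linear relations among them is exactly the multiples of $w_1+w_2+w_3=0$; therefore $\sum_i(\lambda_i/s_i)\,w_i=0$ forces $\lambda_i=k s_i$ for a single scalar $k$, and the side condition $\sum_i\lambda_i=0$ becomes $k(s_1+s_2+s_3)=0$. A nonzero $(\lambda_i)$ exists precisely when $s_1+s_2+s_3=0$. (Equivalently one can just expand the determinant $\det(C_2-C_1,\,C_3-C_1)$, which collapses to a nonzero multiple of $(s_1+s_2+s_3)/(s_1 s_2 s_3)$ once one substitutes $w_i\times w_j=\pm\sqrt{3}/2$, but the relation-space argument avoids the computation.)

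For $(2)\Leftrightarrow(3)$ the starting observation is that $X$ lies on all three circles, since $|XC_i|=r_i$, and that $X$ is in fact a transversal triple crossing: the tangent to the circle centered at $C_i$ at the point $X$ is perpendicular to $C_i-X\parallel w_i$, hence has direction $t_i$, and the three $t_i$ are distinct. For $(3)\Rightarrow(2)$: a second common point $X'$ is, like $X$, at distance $r_i$ from each $C_i$, so every $C_i$ lies on the perpendicular bisector of the segment $XX'$ and the three centers are collinear. For $(2)\Rightarrow(3)$: if the $C_i$ lie on a line $\ell$, let $X'$ be the reflection of $X$ across $\ell$; reflection across $\ell$ fixes each circle (it fixes the center and preserves the radius), so $X'$ lies on all three circles, and it carries the transversal crossing at $X$ to a transversal crossing at $X'$. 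Finally $X'\ne X$, because $X\in\ell$ would put each $C_i-X$ along $\ell$ and hence each $t_i$ perpendicular to $\ell$, making the three outgoing directions parallel — impossible since they differ by $2\pi/3$. So $X$ and $X'$ are two distinct triple crossing points.

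I do not expect a serious obstacle: the one computation, in $(1)\Leftrightarrow(2)$, is short, and the rest is synthetic geometry. The points that deserve care are exactly the bookkeeping ones: verifying that $C_i-X=(1/s_i)\,w_i$ genuinely encodes the signed-curvature sign rule rather than its negation, and being explicit that ``triple crossing point'' in $(3)$ means a transversal intersection of every pair of the three circles — so one must check transversality both at $X$ and at the reflected point $X'$, not merely that all three circles pass through them. It is also worth noting at the outset that each arc is a genuine circular arc (so each $s_i\ne 0$ and each $C_i$ is a finite point), which is what makes conditions $(2)$ and $(3)$ meaningful.
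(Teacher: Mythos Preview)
Your proof is correct. The synthetic argument for $(2)\Leftrightarrow(3)$ is essentially the paper's: reflect $X$ across the line of centers to produce a second common point, and conversely put all three centers on the perpendicular bisector of $XX'$; your check that $X\notin\ell$ is a slight rewording of the paper's observation that two circles centered on a line cannot cross transversally at a point of that line.

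Where you differ is in $(1)\Leftrightarrow(2)$. The paper first argues that zero curvature sum forces one sign to differ from the other two, normalizes to signed curvatures $1/r_1,\,-1/r_2,\,1/r_3$, observes that the rays $XC_i$ are at angles $\pi/3$, and then compares the side--angle--side areas of triangles $C_1XC_2$, $C_2XC_3$, $C_1XC_3$: the identity $r_1r_2+r_2r_3=r_1r_3$ (equivalent to $1/r_1-1/r_2+1/r_3=0$) is exactly the statement that the first two areas sum to the third, which forces $C_1C_2C_3$ to be degenerate. Your route replaces this trigonometric computation by a linear-algebra one: writing $C_i=(1/s_i)\,w_i$ with $w_1+w_2+w_3=0$ the unique relation among the $w_i$, collinearity becomes the existence of a nontrivial $(\lambda_i)$ with $\sum\lambda_i=0$ and $\lambda_i/s_i$ constant, which holds iff $\sum s_i=0$. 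Your approach is symmetric in the three arcs and avoids the sign-case normalization; the paper's approach is more geometric and makes the role of the $\pi/3$ angles visually explicit. Both are short, and each proves both directions at once (the paper splits $(1)\Rightarrow(2)$ and $(2)\Rightarrow(1)$ but they are the same area identity read two ways).
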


\begin{figure}[t]
\centering\includegraphics[scale=0.46]{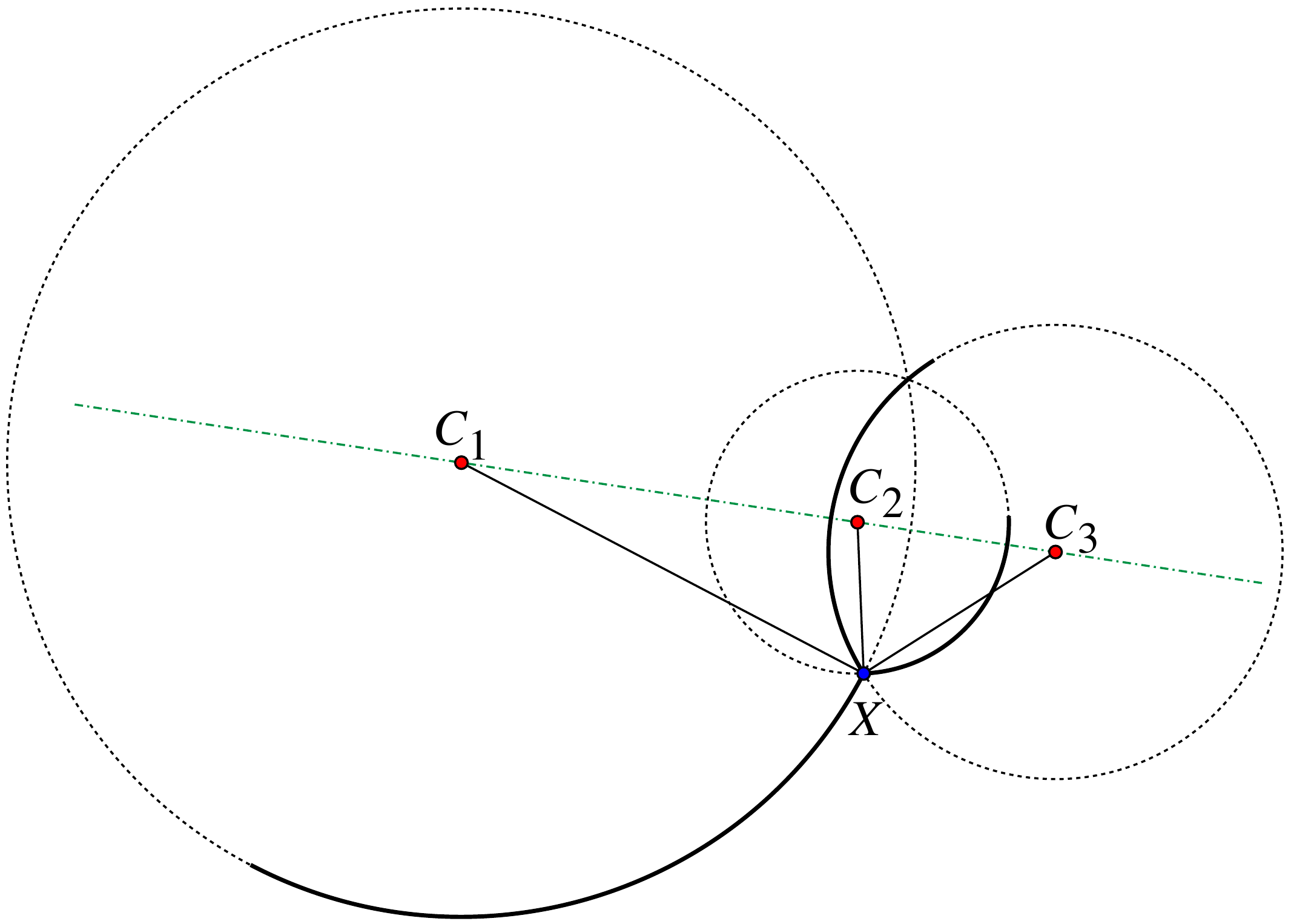}
\caption{Arcs meeting the conditions of Lemma~\ref{lem:equivcond}}
\label{fig:equivcond}
\end{figure}

Again, we defer the proof to an appendix.
This triple crossing property is closely related to, and inspired by, the existence and uniqueness of the standard double bubble, for which see e.g. Proposition 14.1 of~\cite{Mor-GMT-4ed}.
Figure~\ref{fig:equivcond} shows a set of of arcs, circles, centers, and radii meeting the conditions of the lemma.
The lemma applies directly only to circular arcs and not to straight line segments, but the first and third conditions are equivalent more generally: For two arcs and one segment that meet at a point $X$, obeying Plateau's laws and having zero curvature sum, the two arcs must have equal and opposite curvatures and there are again two triple crossings between the two circles and the line containing the three given curves. It is not possible for one arc and two segments to have zero curvature sum, and the case of three line segments can be thought of as having a second triple crossing ``at infinity''.

The \emph{M\"obius transformations} are a family of transformations of the plane, augmented by a single point at infinity; if the points of the plane are represented by complex numbers, each such transformation is either a fractional linear transformation
$$z\mapsto \frac{az+b}{cz+d}$$
(where $a$, $b$, $c$, and $d$ are complex numbers with $ad-bc\ne 0$)
or its complex conjugate. The M\"obius transformation of a circle or line is necessarily another circle or line, and the M\"obius transformations can be characterized as the largest group of transformations that preserves circles in this way. M\"obius transformations are also \emph{conformal}: if two curves meet at an angle $\theta$, their transformed images also form the same angle. The M\"obius transformation of a line segment or circular arc is generally another line segment or circular arc, but there are two additional undesired cases: it could instead be a ray, or a pair of oppositely-oriented rays on a common line.

M\"obius transformations do not preserve physically meaningful quantities of soap bubbles such as their pressures or the total length or area of soap film. They do not even preserve the finiteness of the bubbles: it is possible to find a transformation that takes the unbounded region of the plane to a bounded region and vice versa. Nevertheless, surprisingly, they transform soap bubble clusters into other soap bubble clusters:

\begin{lemma}
\label{lem:mobius}
Let $B$ be a planar soap bubble cluster, and let $\tau$ be a M\"obius transformation with the property that no curve or arc of $B$ is transformed into a ray or double ray. Then $\tau(B)$ is also a planar soap bubble cluster.
\end{lemma}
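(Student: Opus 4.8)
The plan is to verify directly that $\tau(B)$ meets the hypotheses of Lemma~\ref{lem:local}: that it is a finite collection of circular arcs and line segments obeying Plateau's laws, with the three signed curvatures summing to zero at each vertex. The first step is to see that $\tau(B)$ consists of arcs and segments at all. A M\"obius transformation sends each circle or line to a circle or line, so each curve of $B$ maps to an arc, a segment, a ray, or a double ray; the hypothesis on $\tau$ excludes the last two. Consequently the point $\tau^{-1}(\infty)$ lies on no curve of $B$, not even at an endpoint (a preimage of infinity in the relative interior of a curve yields a double ray, and at an endpoint a ray), so no vertex of $B$ is carried to infinity and $\tau(B)$ is a genuine finite planar family of arcs and segments.

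Next I would check Plateau's laws for $\tau(B)$. Since $\tau$ is a homeomorphism of the sphere, it carries a non-crossing embedding to a non-crossing embedding and preserves the number of curves meeting at each vertex. Since M\"obius transformations, including the orientation-reversing ones, are conformal, they preserve the unsigned angle between two curves, so the $2\pi/3$ junction angles of $B$ are inherited by $\tau(B)$.

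The heart of the matter is the curvature-sum condition, and for this I would pass through the M\"obius-invariant reformulation of Lemma~\ref{lem:equivcond}. Because $B$ is a soap bubble cluster, Lemma~\ref{lem:local} tells us its signed curvatures already sum to zero at each vertex. Fix a vertex $X$, and let $K_1, K_2, K_3$ be the three full circles or lines carrying the curves of $B$ at $X$. All three pass through $X$, so $X$ is a triple crossing point of them; by Lemma~\ref{lem:equivcond}, together with the remark extending the equivalence of its conditions (1) and (3) to vertices incident to segments, there is a second triple crossing point (possibly the point at infinity, in the all-segments case). Now apply $\tau$: the images are again three circles or lines, and since $\tau$ is a bijection of the sphere it preserves incidences among them, so they have triple crossings at $\tau(X)$ and at the image of the second point. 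These images carry the three curves of $\tau(B)$ at $\tau(X)$, whose $2\pi/3$ angle condition we have just verified, so the same extended form of Lemma~\ref{lem:equivcond}, read from (3) to (1), gives that the signed curvatures of $\tau(B)$ sum to zero at $\tau(X)$. As $X$ was arbitrary, Lemma~\ref{lem:local} now certifies that $\tau(B)$ is a planar soap bubble cluster.

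I expect the main obstacle to be the bookkeeping around the point at infinity and the mixed vertex types: confirming that excluding exactly rays and double rays is precisely what keeps every arc, segment, and vertex finite, and that the triple-crossing argument runs uniformly whether a vertex of $B$ is incident to three arcs, to two arcs and a segment, or to three segments, and whether $\tau$ happens to straighten a given arc into a segment or bend a segment into an arc, including the degenerate ``triple crossing at infinity.'' Orientation causes no trouble: an orientation-reversing $\tau$ negates all three signed curvatures at a vertex simultaneously, so a zero sum stays a zero sum, and it preserves triple crossings in any case.
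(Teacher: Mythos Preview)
Your proposal is correct and follows essentially the same route as the paper: use Lemma~\ref{lem:local} to pass from the zero-curvature-sum condition to the two-triple-crossing condition of Lemma~\ref{lem:equivcond} (extended to handle segments), transport that M\"obius-invariant condition through~$\tau$, and then reverse the implications. Your version is somewhat more careful than the paper's about why no vertex is sent to infinity and about orientation-reversing~$\tau$, but the argument is the same.
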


\begin{proof}
We first consider the case that neither $B$ nor $\tau(B)$ has any straight segments.
By Lemma~\ref{lem:local} and Lemma~\ref{lem:equivcond}, $B$ consists of arcs meeting at angles of $2\pi/3$ such that each three arcs that meet have circles forming two triple crossings.
The meeting angles and the triple crossings of circles are preserved by M\"obius transformation,
so $\tau(B)$ is also a collection of arcs meeting at angles of $2\pi/3$ such that each three arcs that meet have circles forming two triple crossings. By Lemma~\ref{lem:equivcond} again, the signed curvatures at each meeting point of $\tau(B)$ add to zero, and by Lemma~\ref{lem:local} again, $\tau(B)$ is a planar soap bubble cluster.

If $B$ or $\tau(B)$ contains straight segments, the equivalence between zero sums of signed curvature and triple crossing points (counting the point at infinity as a crossing of any two lines) at any endpoint of a segment can be shown even more easily, as detailed in the remarks following Lemma~\ref{lem:equivcond}, and the result follows in the same way.
\end{proof}

Lemmas~\ref{lem:equivcond} and~\ref{lem:mobius} are stated in different terminology as Lemmas 5.1 and 5.2 of~\cite{Wic-PhD-02}. A hint of this M\"obius transformation property of bubble clusters can also be found in Sullivan's conjecture that four-bubble clusters are stereographic projections of a 4-dimensional simplex~\cite{Sul-FE-97}, as any two such projections are related to each other by a three-dimensional M\"obius transformation.

\section{Bridgelessness}

Our graph-theoretic characterization of planar soap bubble clusters asserts that the graphs of these clusters are bridgeless, or equivalently (since they are three-regular graphs) that they are two-vertex-connected. In the language of soap bubbles, this means that all of the arcs and segments in the cluster separate two different bubbles. As we now prove, this is necessarily true for a planar soap bubble cluster.

\begin{lemma}
\label{lem:bridgeless}
In a planar soap bubble cluster, it is not possible for a segment or arc to have the same bubble on both of its sides.
\end{lemma}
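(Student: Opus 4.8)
The plan is to argue by contradiction, combining the Young--Laplace equation with the M\"obius invariance established in Lemma~\ref{lem:mobius}. Suppose $B$ is a planar soap bubble cluster in which some arc or segment $e$ has the same bubble $b$ on both of its sides, and fix a pressure assignment $p$ witnessing that $B$ is a soap bubble cluster. Because the two sides of $e$ belong to the single bubble $b$, the pressure difference across $e$ is $p(b)-p(b)=0$, so by the Young--Laplace equation $e$ has zero curvature; that is, $e$ is a straight line segment, lying on some line $\ell$.

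Next I would bend $e$ into a genuine circular arc by a suitable M\"obius transformation while keeping the whole figure a valid bubble cluster. Pick a point $z_0$ of the plane lying neither on any of the finitely many arcs and segments of $B$ nor on the line $\ell$; such a point exists, since the excluded set is a finite union of lines and circular arcs and hence has measure zero in the plane. Let $\tau$ be a M\"obius transformation with $\tau(z_0)=\infty$, say $\tau(z)=1/(z-z_0)$. A bounded segment or circular arc is carried by a M\"obius transformation to a ray or double ray only when its pole---the preimage of $\infty$---lies on that segment or arc; so, by the choice of $z_0$, no curve of $B$ is mapped to a ray or double ray, and Lemma~\ref{lem:mobius} gives that $\tau(B)$ is again a planar soap bubble cluster. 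On the other hand, since $z_0\notin\ell$, the image of the line $\ell$ is a genuine circle rather than a line, and $\tau(e)$ is a proper sub-arc of this circle and therefore has nonzero curvature.

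To finish, I would extract the contradiction from $\tau(B)$. A M\"obius transformation is a homeomorphism of the sphere, so $\tau$ carries the planar embedding of $B$ onto that of $\tau(B)$, preserving all incidences between edges and faces; in particular the bubble $\tau(b)$ lies on both sides of $\tau(e)$. Applying the Young--Laplace equation now to $\tau(B)$ and to a consistent pressure assignment for it, the curvature of $\tau(e)$ must be proportional to the pressure difference across it, which is again zero, so $\tau(e)$ has zero curvature---contradicting the preceding paragraph. (The argument does not care whether $b$ is a bounded bubble or the outer region, as it uses only the combinatorial incidence structure and the fact that $\tau(b)$ receives a single pressure value.) Hence no such $e$ can exist.

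I expect the one delicate point to be the choice of $\tau$: it must simultaneously satisfy the hypothesis of Lemma~\ref{lem:mobius}---no arc or segment of $B$ turning into a ray or double ray---and genuinely deform the segment $e$ into a circular arc of nonzero curvature. Both requirements amount to keeping the pole $z_0=\tau^{-1}(\infty)$ off a fixed finite collection of lines and arcs, so a short genericity argument handles it. The remaining ingredients---the two applications of Young--Laplace and the preservation of face incidences under a homeomorphism---are routine, which is why the overall argument is brief.
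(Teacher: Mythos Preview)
Your argument is correct and follows essentially the same route as the paper: use Young--Laplace to force a bridge to be straight, then apply a M\"obius transformation (via Lemma~\ref{lem:mobius}) that bends it into a genuinely curved arc, and invoke Young--Laplace again for the contradiction. Your write-up simply makes explicit the two details the paper leaves implicit---the generic choice of the pole $z_0$ so that Lemma~\ref{lem:mobius} applies, and the preservation of face--edge incidences under the sphere homeomorphism $\tau$.
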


\begin{proof}
It is not possible for a circular arc to have the same bubble on both sides, because the pressure would necessarily be the same on both sides and the circularity of the arc would violate the Young--Laplace equation. But then it is also not possible for a line segment to have the same bubble on both sides, because it is straightforward to find a M\"obius transformation that transforms this segment into a curved arc, violating either the Young--Laplace equation or Lemma~\ref{lem:mobius}.
\end{proof}

\section{A M\"obius-invariant power diagram of disks}

In~\cite{Epp-lombardi} we used three-dimensional hyperbolic geometry to devise, from a given set of disjoint disks (or complements of disks) in the plane, a partition of the plane into regions bounded by circular arcs that is invariant under M\"obius transformations, and we used this partition to construct Lombardi drawings of 3-regular planar graphs.  In the same paper we observed that the junction between regions for three mutually tangent disks could be found as the isodynamic point of the triangle formed by the three points of tangency, allowing the calculation of the diagram for this case to avoid hyperbolic geometry, but we did not find a purely two-dimensional description of this diagram for arbitrary disjoint disks and we did not describe how to extend it to disks that might intersect.  Here we describe the same structure as a minimization diagram for a distance function defined between (point,disk) pairs in the plane, and we extend it to disks that may not necessarily be disjoint; that is, we show that it can be interpreted as a form of planar \emph{power diagram}. Note that although this diagram is itself M\"obius-invariant, the distance that it minimizes is not; this situation is much like that for the Delaunay triangulation of points on a sphere, which is M\"obius-invariant even though spherical distance is not~\cite{BerEpp-SODA-03}. We keep our description brief and intuitive, for space reasons and because the additional generalization compared to~\cite{Epp-lombardi} is not necessary for our results on soap bubbles.

\begin{figure}[t]
\centering\includegraphics[height=1.5in]{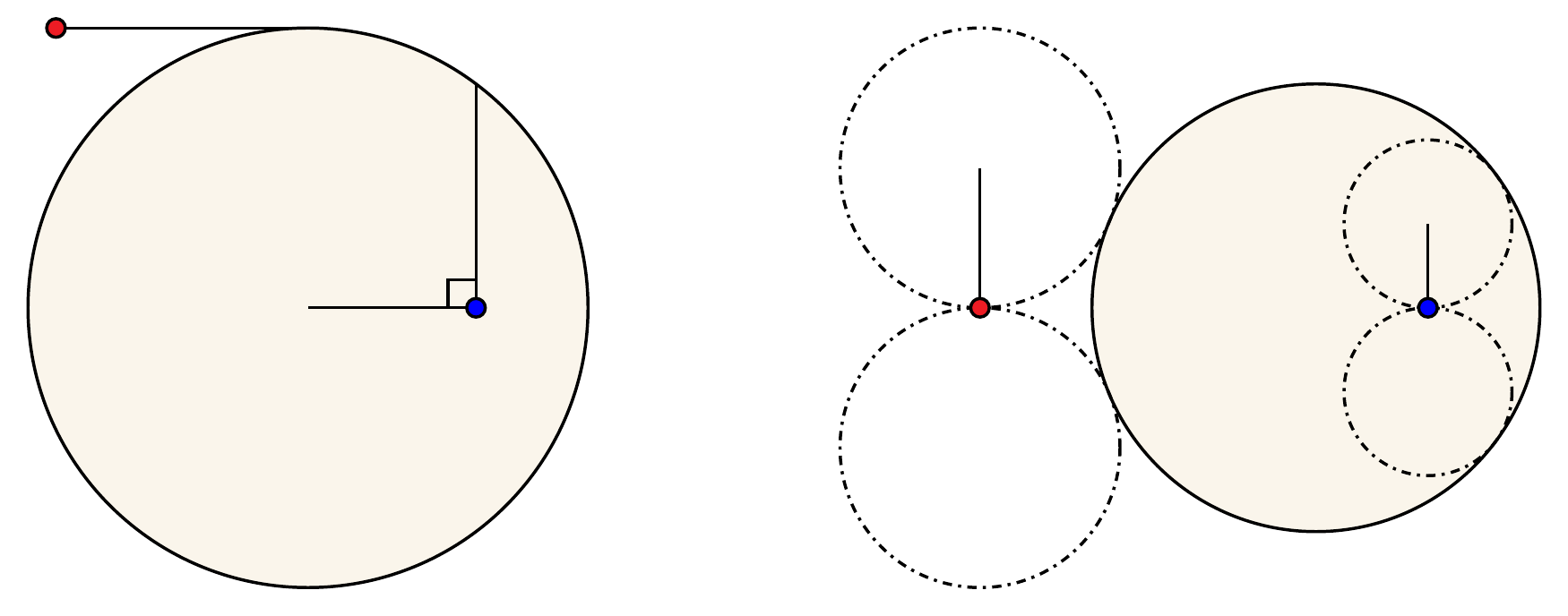}
\caption{Left: The power distance from the red point outside the circle is the length of its tangent segment; the power distance from the blue point inside the circle is half the length of a chord bisected by the point. Right: The radial power distance from the red point outside the disk is the radius of the two congruent circles tangent to each other at the red point and both tangent to the circle; the radial power distance from the blue point inside the disk is the negation of the radius of the two congruent circles.}
\label{fig:radpowdist}
\end{figure}

\begin{wrapfigure}[18]{r}{.35\textwidth}
  \centering
  \includegraphics[width=2.25in]{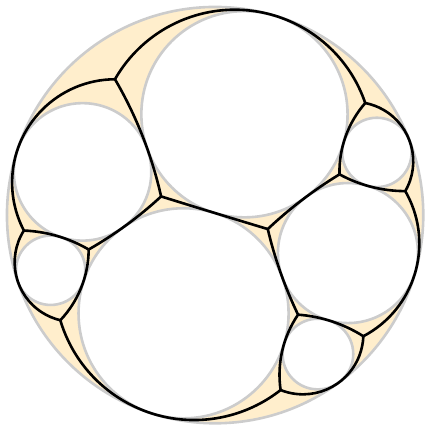}
  \caption{A packing of seven disks and one disk complement, with its M\"obius-invariant power diagram. Modified from a figure in~\cite{Epp-lombardi}.}
  \label{fig:Frucht2}
\end{wrapfigure}

The three-dimensional hyperbolic construction from~\cite{Epp-lombardi} is as follows. We consider the plane in which the disks lie to be the boundary plane of points ``at infinity'' for a three-dimensional Poincar\'e halfspace model of hyperbolic space; the circle bounding each disk $D_i$ is the set of limit points of a plane $P_i$ in this hyperbolic space. We form the three-dimensional Voronoi diagram whose sites are the planes $P_i$, and we extend this diagram from hyperbolic space to the boundary points of the model. The Voronoi diagram is invariant under congruences of hyperbolic space, and these congruences become M\"obius transformations when restricted to the boundary of the model, so the resulting diagram is invariant under M\"obius transformations as desired.

To extend the hyperbolic construction to disks that may intersect, we consider each of the planes $P_i$ in hyperbolic space to be the boundary of a halfspace $H_i$ containing $D_i$. We define the \emph{signed distance} of a point $q$ from $P_i$ to be the positive distance from $q$ to $P_i$ if $q$ is outside of $H_i$, and the negative distance if $q$ is inside of $H_i$. Then, in hyperbolic space, the minimization diagram of these signed distances has bisectors that are hyperbolic planes, so each cell of the minimization diagram is an intersection of halfspaces. When the disks are disjoint, the previous construction (the Voronoi diagram of disjoint hyperbolic planes) is recovered as a special case.
For a point $q$ within hyperbolic space, its nearest neighbor by signed distance can be found (as in the Euclidean case) by considering concentric spheres centered at $q$ and determining either the largest such sphere that is contained in a halfspace $H_i$ or (if no such sphere exists) the smallest such sphere disjoint from all halfspaces. When $q$ is a point at infinity of hyperbolic space, its nearest neighbor can be found in the same way, but using \emph{horospheres} in place of concentric spheres; these are shapes that in the halfspace model of hyperbolic space are modeled as spheres tangent at $q$ to the plane at infinity.

In the Euclidean plane, the classical power distance of a point $q$ to a circle $C$ is either the length of a tangent line segment from $q$ to $C$ (if $q$ is outside $C$) or minus half of the length of a chord bisected by $q$ (if $q$ is inside $C$); see Figure~\ref{fig:radpowdist}, left. If $d$ is the Euclidean distance from $q$ to the center of $C$ and $r$ is the radius of $C$, then by the Pythagorean theorem, in either case, the squared power distance multiplied by the sign of the power distance has the simple algebraic form $d^2-r^2$. The power diagram is the minimization diagram of power distance (or of $d^2-r^2$); it has a polygonal cell for each of a given set of circles within which the power distance to that circle is less than the power distance to any other circle~\cite{Aur-SJC-87}.

Analogously, in the Euclidean plane, given a point $q$ and a disk $D$, define the \emph{radial power distance} from $q$ to $D$ by finding a pair of congruent circles $C_1$ and $C_2$, tangent to each other at $q$ and both tangent to $D$, as shown in Figure~\ref{fig:radpowdist}, right. If $q$ is outside $D$, the radial power distance is the radius of these circles; if $q$ is inside $D$, it is the negation of the radius. Again, by the Pythagorean theorem, the radial power distance has the algebraic form $(d^2-r^2)/2r$. We claim that the M\"obius-invariant diagram of disks defined above by three-dimensional hyperbolic geometry is the same thing as the minimization diagram for radial power distance. To see this, consider a planar cross-section of the three-dimensional hyperbolic space, defined by the plane in the half-space model that passes through $q$ and the center $c$ of $D$ and is perpendicular to the boundary plane. This cross-section contains the closest point of $D$ to $q$; the half-space bounded by $D$ appears in the cross-section as a semicircle congruent to half of $D$, and the largest horosphere containing $q$ and entirely inside or entirely outside of the half-space appears in the cross-section as a circle congruent to $C_1$ and $C_2$. Indeed, the cross-section of the half-space and the horosphere together may be formed from $D$ and $C_1$ or $C_2$ by folding them upwards at right angles from the given Euclidean plane, along line $qc$ (Figure~\ref{fig:radialpower}). From this it can be seen that the minimization diagram of radial power distance prioritizes the circles $C_i$ in exactly the same way that the minimization diagram of hyperbolic signed distance prioritizes the corresponding horospheres, and therefore that the two diagrams coincide.

\begin{figure}[t]
\floatbox[{\capbeside\thisfloatsetup{capbesideposition={right,top},capbesidewidth=8cm}}]{figure}[\FBwidth]
{\caption{Cross-section of a halfspace (blue) and a horosphere externally tangent to it (red) in the half\-space model of hyperbolic space. The plane on which the shadows fall is the plane at infinity of the model, and the black curves show the line $qc$ (where $q$ is the point of tangency of the horosphere to the plane at infinity and $c$ is the center of the disk from which the halfspace was defined) and the two tangent circles in the plane at infinity that are congruent (as Euclidean figures) to the cross-sectional circles.}
\label{fig:radialpower}}
{\includegraphics[height=2in]{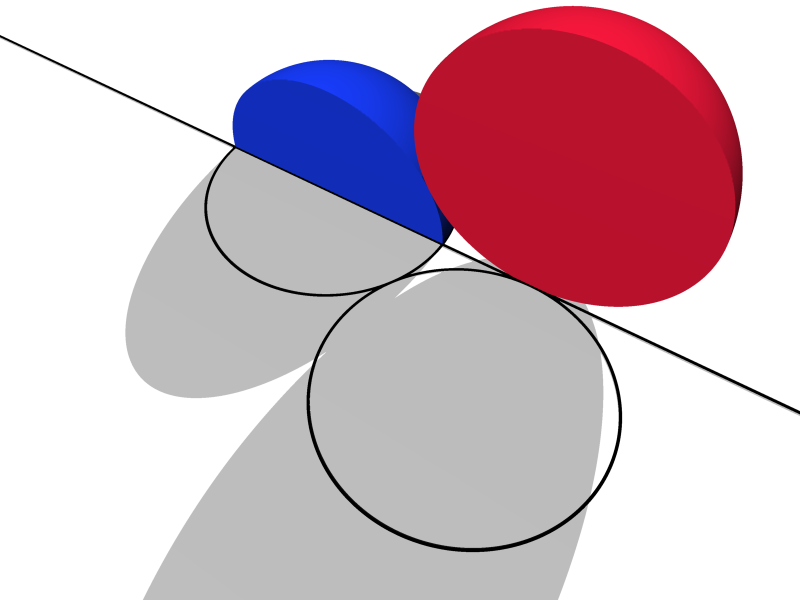}}
\end{figure}

\section{Realizing graphs as soap bubbles}

We are now ready to describe the constructions from~\cite{Epp-lombardi} that we use here to generate a planar soap bubble cluster for any bridgeless 3-regular planar multigraph~$G$.

We consider three different cases, according to the connectivity and existence of multiple edges in the given graph~$G$:
\begin{itemize}
\item If $G$ is 3-vertex-connected multigraph that is not a simple graph, then $G$ must have two vertices and three parallel edges. For, any other connected 3-regular multigraph must have a pair of parallel edges between some pair of vertices $u$ and $v$, each of $u$ and $v$ must also be adjacent to a singleton edge $uu'$ and $vv'$, and the removal of $u'$ and $v'$ would separate $u$ and $v$ from the rest of the graph, violating 3-connectivity. But a graph with three parallel edges is easily realized as a standard double bubble, with one straight line segment and two arcs spanning angles of $4\pi/3$ (Figure~\ref{fig:spqr-glue}, left).

\item If $G$ is 3-vertex-connected and has no multiple edges, its planar dual is a maximal planar graph $G'$. By the Koebe--Thurston--Andreev circle packing theorem~\cite{Ste-ICP-05}, we may find a set $D$ of disks (or complements of disks) with disjoint interiors, corresponding one-for-one with the vertices in $G'$, such that two vertices are adjacent in $G'$ if and only if the corresponding two disks in $D$ are tangent. By performing a M\"obius transformation if necessary we may ensure without loss of generality that $D$ includes one disk complement and that the other elements of $D$ are disks. Let $\Pi$ be the M\"obius-invariant power diagram of $D$, as shown in Figure~\ref{fig:Frucht2}. None of the boundaries between regions in $\Pi$ can cross the disk complement in $D$ and therefore no boundary can be a ray or a double ray. $\Pi$ has a region for each disk of $G'$, so by planar graph duality its vertices and edges form a drawing of $G$~\cite{Epp-lombardi}.

To show that $\Pi$ realizes $G$ as a planar soap bubble cluster,
let $ABC$ be any triangle of the maximal planar dual graph $G'$, and let $p$, $q$, and $r$ be the points of tangency of the corresponding three disks in $D$. An inversion $\xi$ centered at the isodynamic point of triangle $pqr$ (a special type of M\"obius transformation) will take $pqr$ to an equilateral triangle (this property characterizes the isodynamic point and is often used to define it) and therefore causes the transformed images of the three disks to be congruent. By the M\"obius invariance of the power diagram, $\xi(\Pi)$ is the M\"obius-invariant power diagram of $\xi(D)$, and in $\xi(\Pi)$ (by symmetry) the three curves forming the boundaries between the regions for $p$, $q$, and $r$ form straight line segments or rays that meet at angles of $2\pi/3$. These boundary curves lie on three lines with have two triple crossings in the extended plane: one where the segments meet, and one at infinity. Back in $\Pi$, the boundaries between the same three regions are the images under the transformation $\xi^{-1}$ of these three line segments; therefore they are necessarily either line segments or arcs of circles, they again meet at angles of $2\pi/3$, and they again lie on lines or circles that have two triple crossing points. By Lemma~\ref{lem:equivcond}, they also have signed curvatures that sum to zero. We have shown that each three incident curves in $\Pi$ satisfy the conditions of Lemma~\ref{lem:local} and therefore by that lemma $\Pi$ is a planar soap bubble cluster realizing $G$. Figures~\ref{fig:valid} and~\ref{fig:Frucht2} give examples of soap bubble clusters constructed for 3-connected graphs in this way.

\begin{figure}[t]
\centering\includegraphics[height=1.25in]{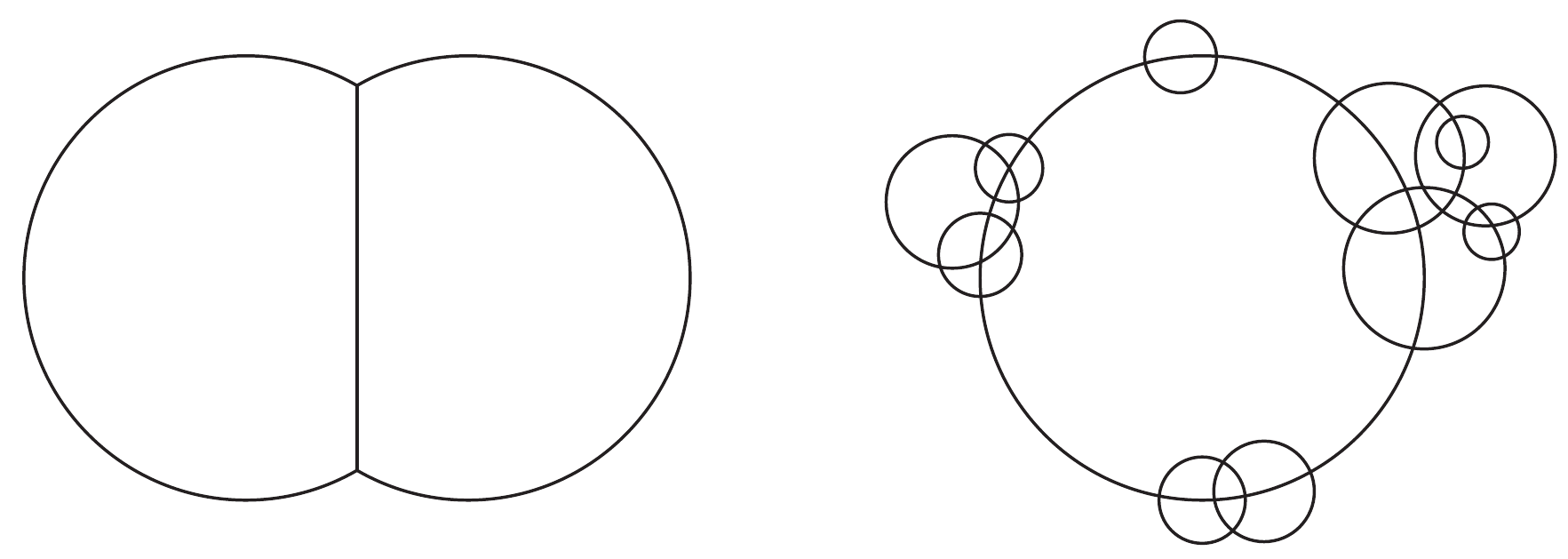}\qquad
\includegraphics[height=1.25in]{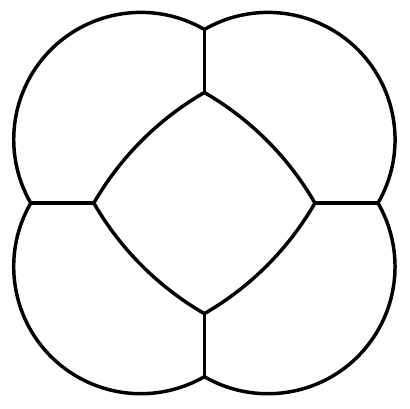}
\caption{Left: double bubble representing a 3-edge multigraph. Center: gluing multiple Lombardi drawings together on an S node with alternating virtual and non-virtual edges (schematic view, not an actual planar soap bubble cluster), modified from a figure in~\cite{Epp-lombardi}. Right: a planar soap bubble cluster that does not come from a circle packing.}
\label{fig:spqr-glue}
\end{figure}

\item Finally, suppose that $G$ is bridgeless but not 3-connected. We may decompose $G$ into 3-connected components, whose connections to each other are described by an SQPR tree (Figure~\ref{fig:spqr})~\cite{DiBTam-ICALP-90,Mac-DMJ-37}. In an SPQR tree, each node represents a 3-connected component of the graph. Each component is a smaller graph that may be a cycle (an S-node), a two-vertex multigraph (a P-node), or a 3-vertex-connected graph (an R-node). Within a component, some edges may be designated as ``virtual'', and each edge of the SPQR tree connects two virtual edges in different components; the original graph $G$ may be recovered by identifying and then deleting pairs of virtual edges. In the SPQR tree of a 3-regular planar graph, each P-node has three edges, each R node is itself 3-regular and planar,
exactly one virtual edge of each linked pair must belong to an S-node, and each S-node must be an even length cycle that alternates between virtual and non-virtual edges~\cite{Poo-3SC-01,EppMum-SCG-10}.

To realize $G$ as a planar soap bubble cluster, we separately realize each $P$-node and $R$-node and then glue together the virtual edges of one $S$-node at a time. To glue together the planar soap bubble clusters linked to an $S$-node, we perform M\"obius transformations of each cluster so that the circular arcs realizing their virtual edges lie on a common circle, and so that the remaining parts of the clusters are shrunken to lie within disjoint disks that meet the circle in the correct order. We then use the arcs from each cluster within the disk that contains it, and the arcs of the shared circle to connect them, as shown schematically in Figure~\ref{fig:spqr-glue} (center). Details of this shrinking and gluing process may be found in~\cite{Epp-lombardi}. Since each triple of incident curves in the resulting drawing is locally a M\"obius transformation of a triple of incident edges in one of the 3-connected components, it again satisfies the conditions of Lemma~\ref{lem:local} and again the whole drawing $\Pi$ is a planar soap bubble cluster.
\end{itemize}

\begin{figure}[t]
\centering\includegraphics[height=1.5in]{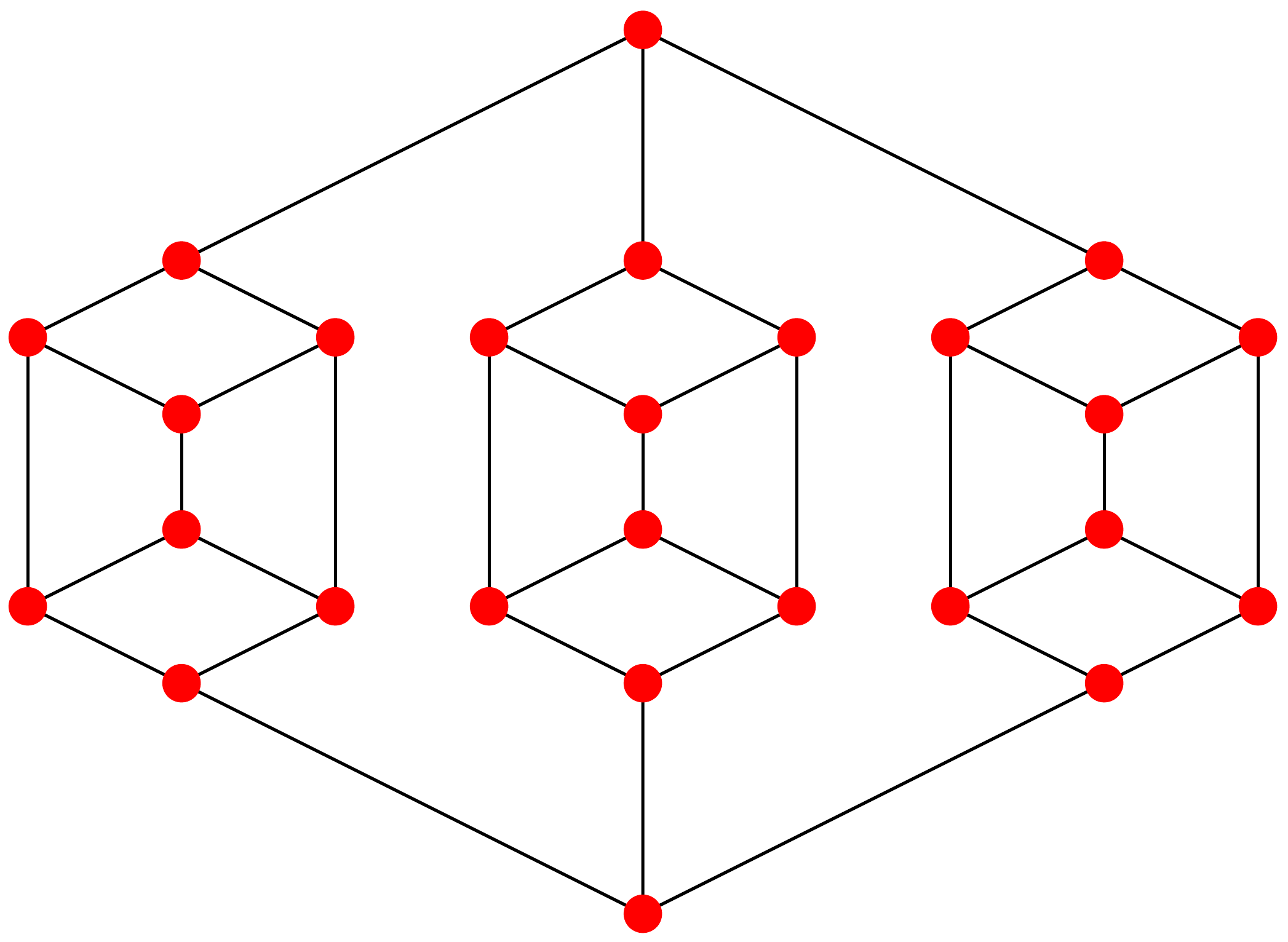}\qquad
\includegraphics[height=1.5in]{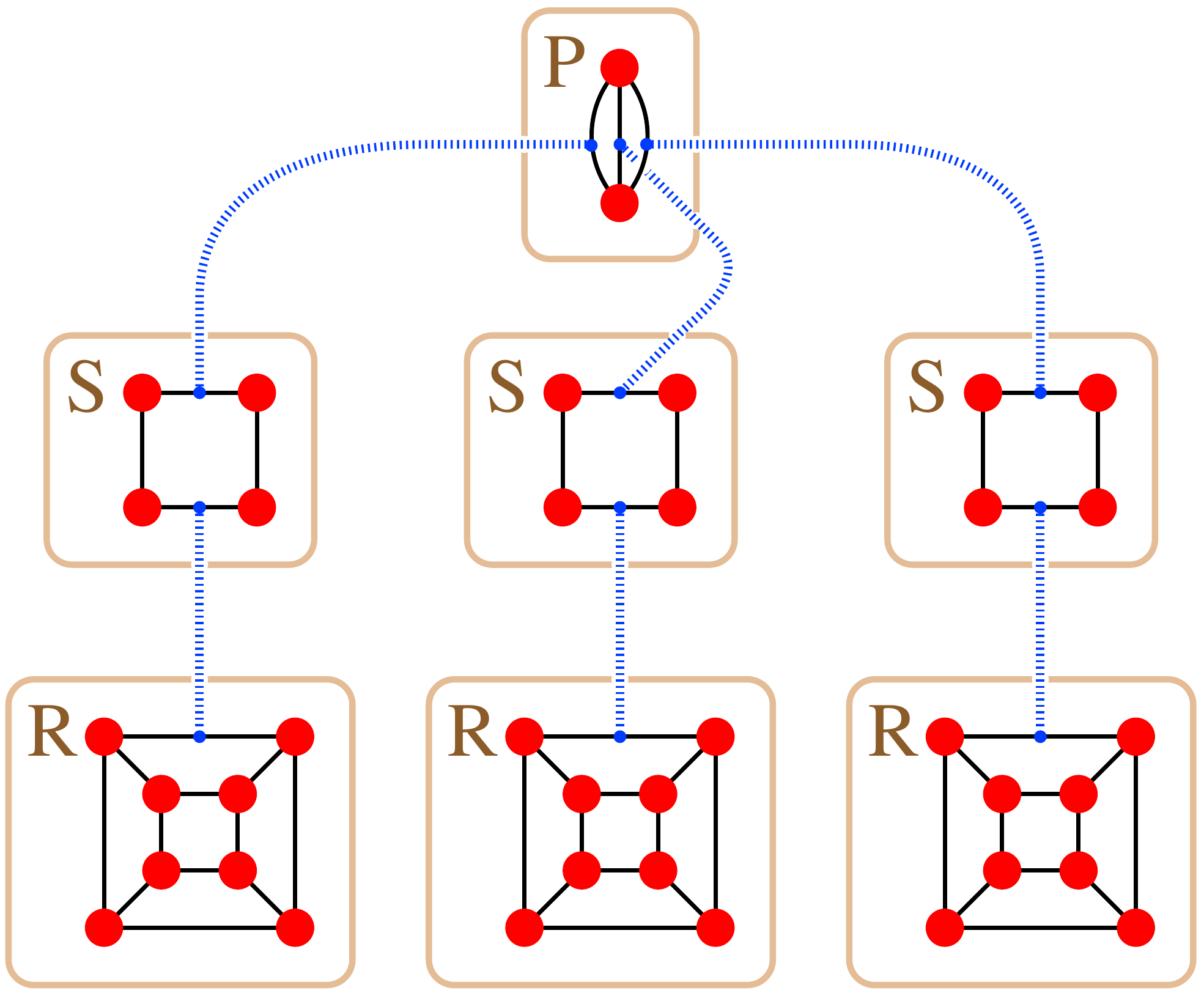}
\caption{A planar 3-regular and bridgeless but not 3-connected graph (left) and its SPQR tree (right). The blue dashed curves link pairs of virtual edges. Figures from~\cite{EppMum-SCG-10}.}
\label{fig:spqr}
\end{figure}

This gives us our main result:

\begin{theorem}
A multigraph $G$ can be realized as the arcs and junctions of a planar soap bubble cluster if and only if $G$ is planar, 3-regular, and bridgeless.
\end{theorem}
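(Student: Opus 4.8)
The plan is to prove the two implications separately. For necessity, suppose $G$ is the graph of a planar soap bubble cluster. Planarity is immediate, since by definition such a cluster is a planar embedding with circular-arc edges; 3-regularity follows directly from the planar form of Plateau's laws, which forces exactly three curves to meet at each junction; and bridgelessness is precisely Lemma~\ref{lem:bridgeless}, which says no arc or segment can bound the same bubble on both sides. Since a 3-regular graph is bridgeless exactly when it is 2-vertex-connected, this also yields the equivalent connectivity statement.

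For sufficiency, assume $G$ is planar, 3-regular, and bridgeless, and argue by cases on its connectivity and simplicity. If $G$ is 3-vertex-connected but not simple, a short counting argument shows that $G$ must be the two-vertex multigraph with three parallel edges: any other connected 3-regular multigraph with a repeated edge has a pair of vertices whose removal disconnects it. This graph is realized directly as the standard double bubble, using one line segment and two arcs spanning angles of $4\pi/3$, whose signed curvatures cancel in accordance with Lemma~\ref{lem:local}.

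If $G$ is 3-vertex-connected and simple, its planar dual $G'$ is a maximal planar graph, so by the Koebe--Andreev--Thurston circle packing theorem~\cite{Ste-ICP-05} there is a family $D$ of disks with disjoint interiors realizing the tangencies of $G'$; after applying a M\"obius transformation I may assume one element of $D$ is a disk complement. Let $\Pi$ be the M\"obius-invariant power diagram of $D$ from Section~6. By planar duality its vertices and edges form a circular-arc drawing of $G$, and the disk complement guarantees that no edge becomes a ray or double ray, so Lemma~\ref{lem:mobius} applies. To verify the local soap-bubble condition at a vertex corresponding to a triangle $ABC$ of $G'$, take the three tangency points $p,q,r$ of the corresponding disks and let $\xi$ be the inversion centered at the isodynamic point of triangle $pqr$; then $\xi$ carries $pqr$ to an equilateral triangle and the three disks to congruent disks, so by symmetry the three boundary curves of $\xi(\Pi)$ at that vertex are straight segments meeting at angles of $2\pi/3$, lying on three lines with triple crossings at their meeting point and at infinity. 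Pulling back by $\xi^{-1}$ and applying Lemma~\ref{lem:equivcond} and then Lemma~\ref{lem:local} shows that $\Pi$ is a planar soap bubble cluster realizing $G$.

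Finally, if $G$ is bridgeless but not 3-connected, I would decompose it using its SPQR tree, in which each R-node is a 3-regular 3-connected planar graph, each P-node is the three-edge multigraph, and each S-node is an even cycle alternating between virtual and non-virtual edges. Realize every R-node and P-node by the preceding cases, and then process S-nodes one at a time: M\"obius-transform each cluster incident to the S-node so that the arc realizing its virtual edge lies on one shared circle while the rest of that cluster shrinks into a disjoint disk meeting the circle in the correct cyclic position, then connect using arcs of the shared circle, as in~\cite{Epp-lombardi}. Each incident triple of curves in the result is locally a M\"obius image of an incident triple in some 3-connected component, so Lemma~\ref{lem:local} again certifies the whole figure. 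The main obstacle is the 3-connected simple case: making the circle packing, the M\"obius invariance of the power diagram, the isodynamic-point symmetrization, and Lemmas~\ref{lem:equivcond} and~\ref{lem:local} all mesh so that the correct local curvature condition holds at every vertex. Once that machinery is assembled the remaining cases are routine, and necessity is essentially immediate.
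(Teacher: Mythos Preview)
Your proposal is correct and follows essentially the same route as the paper: necessity via Plateau's laws and Lemma~\ref{lem:bridgeless}, and sufficiency via the three-case construction (double bubble for the three-edge multigraph, circle packing plus M\"obius-invariant power diagram verified by the isodynamic-point symmetrization for 3-connected simple graphs, and SPQR-tree gluing for the remaining bridgeless graphs). The only cosmetic difference is that you invoke Lemma~\ref{lem:mobius} explicitly in the 3-connected case, whereas the paper argues the local curvature condition directly from the symmetrization and Lemma~\ref{lem:equivcond}; either phrasing works.
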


\begin{proof}
Planarity and 3-regularity follow from Plateau's laws. The requirement that a planar soap bubble cluster be bridgeless is Lemma~\ref{lem:bridgeless}.
Conversely, the construction outlined in this section shows that every planar 3-regular bridgeless graph has a realization as a planar soap bubble cluster.
\end{proof}

This method can be made into a practical algorithm for realizing graphs as soap bubbles, which we have implemented in the 3-connected case~\cite{Epp-lombardi}, but because it depends on the circle packing theorem it necessarily produces approximate numerical results rather than precise vertex and edge coordinates, and its running time depends on the precision of the results~\cite{ColSte-CGTA-03,Moh-DM-93}.
We note that, although our circle packing method is capable of realizing all graphs of planar soap bubble clusters, it is not capable of generating all valid geometries of planar soap bubble clusters. In particular, the clusters we generate have the property that within each bubble there is a circle tangent to all the arcs of the bubble, but this property does not hold for arbitrary soap bubble clusters; Figures \ref{fig:2dfoam} and~\ref{fig:spqr-glue} (right) provide counterexamples.

\newpage

\subsection*{Acknowledgements}

This work was supported in part by NSF grant
0830403 and by the Office of Naval Research under grant
N00014-08-1-1015.
The author is grateful to Frank Morgan and John Sullivan for helpful comments on a draft of this paper.

{\raggedright
\bibliographystyle{abuser}
\bibliography{bubbles}}

\newpage
\appendix

\section{Proofs of Lemmas}
Recall that Lemma~\ref{lem:local} states that a finite collection of circular arcs and line segments forms a planar soap bubble cluster if and only if it obeys Plateau's laws and if, at each endpoint of an arc or segment, the sum of the signed curvatures of the three incoming curves is zero.
Similar local characterizations were known already (see the appendix of~\cite{Qui-RHUMJ-07}) but
for self-containedness, we give here a proof.

\begin{proof}
In one direction, in a planar soap bubble cluster, let $x$, $y$, and $z$ be the pressures of three bubbles that meet at a point. Algebraically, $(x-y)+(y-z)+(z-x)=0$, and therefore the three signed curvatures which by the Young--Laplace equation are proportional to $x-y$, $y-z$, and $z-x$ also add to zero.

In the other direction, suppose that finitely many circular arcs obey Plateau's laws and have signed curvatures adding to zero at each point where three arcs meet. We must assign pressures to the bubbles formed by the arcs, such that the curvatures obey the Young--Laplace equation. Assign zero pressure to the outside region. For each bounded bubble $B$, choose arbitrarily a curve $c$ that starts in the outside region, ends within $B$, meets the arcs only at proper crossing points interior to an arc, and has only finitely many crossing points. The pressure within $B$ can then be determined as the sum of the pressure differences determined by the curvatures of each crossed arc.

To show that this system of pressures obeys the Young--Laplace equation, consider any two bubbles $B_1$ and $B_2$ separated by an arc $A$, and let $c_1$ and $c_2$ be the curves from the outside region to $B_1$ and $B_2$ by which their pressures were determined. We may link $c_1$ and $c_2$ by another curve that remains entirely within the outside region, forming a single (possibly self-crossing) curve that connects $B_1$ to $B_2$, and we may continuously deform this curve, crossing finitely many triple points of the system of arcs as we do, until it forms a line segment crossing arc $A$. Each time this deformation causes the curve to cross a triple point, the assumption that the curvatures at that point add to zero ensures that the sum of the pressure differences of crossings along the curve remain constant. Before the deformation, this sum was the pressure difference between $B_1$ and $B_2$ in our system of pressures, and after the deformation, it is the pressure difference required between bubbles $B_1$ and $B_2$ for the curvature of $A$ to obey the Young--Laplace equation. Since these two pressure differences are equal, $A$ (and by the same argument every arc) obeys the equation.
\end{proof}

Lemma~\ref{lem:equivcond} states that the following three conditions on three circular arcs meeting at a point $X$, with centers of curvature $C_i$ and radii $r_i$ (as shown in Figure~\ref{fig:equivcond}) are equivalent:
\begin{enumerate}
\itemsep0pt
\item The sum of the three signed curvatures of the arcs is zero.
\item The three center points $C_i$ are collinear.
\item The three circles with centers $C_i$ and radii $r_i$ have two triple crossing points.
\end{enumerate}
Similar statements can be found in the literature (e.g. see Lemma~5.1 of~\cite{Wic-PhD-02}) but again we supply a proof for completeness.

\begin{proof}
We partition the proof into four implications between the three conditions of the lemma.
\begin{description}
\item[{\rm (1)${}\Rightarrow{}$(2):}]~\\
Suppose that the three signed curvatures sum to zero; then one must have a different sign than the other two. Without loss of generality (by permuting the indices and by mirroring the configuration, if necessary) we may assume that the signed curvatures are $1/r_1$, $-1/r_2$, and $1/r_3$.
By assumption, the sum of these three quantities is zero; by multiplying each term of this sum by $r_1r_2r_3$ and rearranging, we obtain the equation $r_1r_2+r_2r_3=r_1r_3$.

Note that the three lines $XC_i$ form angles of $\pi/3$ with each other, because they are perpendicular to the arcs, which meet at angles of $2\pi/3$. The assumption on the signs of the curvatures implies that angles $C_1XC_2$ and $C_2XC_3$ must both equal $\pi/3$, and angle $C_1XC_3$ must equal $2\pi/3$. Now consider the areas of the three triangles $C_1XC_2$, $C_2XC_3$, and $C_1XC_3$. In any triangle, the area can be computed by the side-angle-side formula as half the product of two adjacent side lengths with the sine of the angle formed by the same two sides. Thus, these triangle areas are $\frac12 r_1r_2\sin\frac{\pi}{3}$, $\frac12 r_2r_3\sin\frac{\pi}{3}$, and $\frac12 r_1r_3\sin\frac{2\pi}{3}$ respectively. But $\sin\pi/3=\sin 2\pi/3$, so the already-obtained equation $r_1r_2+r_2r_3=r_1r_3$ implies that triangles $C_1XC_2$ and $C_2XC_3$ together have the same total area as triangle $C_1XC_3$. This could only happen if the three centers $C_1$, $C_2$, and $C_3$ are collinear, for otherwise the sum of the areas of $C_1XC_2$ and $C_2XC_3$ would differ from the area of $C_1XC_3$ by the area of triangle $C_1C_2C_3$, which is zero only when these three points are collinear.

\item[{\rm (2)${}\Rightarrow{}$(1):}]~\\
By (2) the three center points $C_i$ are collinear; assume without loss of generality that $C_2$ lies between $C_1$ and $C_3$ on their common line. Because the three centers $C_1$, $C_2$, and $C_3$ form angles of $\pi/3$ rather than $2\pi/3$, the signed curvature of the middle center $C_2$ has the opposite sign to the signed curvature of the other three curvatures; we may assume without loss of generality (by mirror reversing the configuration if necessary) that these three signed curvatures are $1/r_1$, $-1/r_2$, and $1/r_3$. As in the previous case, the three lines $XC_i$ form angles of $\pi/3$ with each other so angles $C_1XC_2$ and $C_2XC_3$ must both equal $\pi/3$, angle $C_1XC_3$ must equal $2\pi/3$, and by collinearity the two triangles $C_1XC_2$ and $C_2XC_3$ together disjointly cover the same region of the plane as the single triangle $C_1XC_3$.

We can apply the side-angle-side formula to this region in two different ways, giving the equation
$$\frac12 r_1r_2\sin\frac{\pi}{3}+\frac12 r_2r_3\sin\frac{\pi}{3}=\frac12 r_1r_3\sin\frac{2\pi}{3}.$$
But the factors of $1/2$ cancel, as do the factors of $\sin\pi/3=\sin 2\pi/3$,
leaving the simpler equation $r_1r_2+r_2r_3=r_1r_3$.
Dividing all terms by $r_1r_2r_3$ gives $1/r_3+1/r_1=1/r_2$, and rearranging gives
$1/r_1-1/r_2+1/r_3=0$ as desired.

\item[{\rm (2)${}\Rightarrow{}$(3):}]~\\
The three circles have at least one triple crossing point, at $X$. Let $\ell$ be the line through the three centers, assumed to exist by (2). Then because $\ell$ passes through each circle center, a reflection across $\ell$ is a symmetry of each circle and therefore of the whole configuration of three circles. The point $X$ cannot lie on $\ell$, because two circles centered on a line cannot cross at a point that is also on the line, they can only meet at a point of tangency, violating the assumption that the arcs meet at angles of $2\pi/3$. Therefore, the reflection of $X$ across $\ell$ is also a triple crossing point, and the three circles have two triple crossings as (3) states.

\item[{\rm (3)${}\Rightarrow{}$(2):}]~\\
For any two intersecting circles, it is necessarily true that their centers lie on the perpendicular bisector of the chord connecting their two intersection points. But by the assumption of (3) that there are two triple crossing points, the chords defined by each pair of the three given circles coincide; therefore, their perpendicular bisectors also coincide in a line $\ell$ that contains all three circles.
\end{description}
\end{proof}

\end{document}